\newtheorem{theorem}{Theorem}
\newtheorem{proof}{Sketch Proof}
\newtheorem{remark}{Remark}
\title{Loosely Synchronized Rule-Based Planning for \\Multi-Agent Path Finding with Asynchronous Actions}
\author{
    Shuai Zhou\textsuperscript{\rm 2}\footnote{Shuai Zhou conducted this research during his internship at UM-SJTU Joint Institute at Shanghai Jiao Tong University.},
    Shizhe Zhao\textsuperscript{\rm 1},
    Zhongqiang Ren\textsuperscript{\rm 1,3}\footnote{Corresponding Author.}
}
\newcommand{\hidecomments}[1]{}
\newcommand{\algorithmicpreface}{\textbf{Notation:}}
\newcommand{\Notation}{\item[\algorithmicpreface]}
\newcommand{\asypibt}{\ensuremath{\rm ASY\text{-}PUSH}\xspace}
\newcommand{\asypibtswap}{\ensuremath{\rm ASY\text{-}PUSH\text{-}SWAP}\xspace}
\newcommand{\masipp}{SIPP\xspace}
\newcommand{\pullop}{\rm PULL\xspace}
\newcommand{\pushop}{\rm PUSH\xspace}
\newcommand{\swapop}{\rm SWAP\xspace}
\begin{document}

 		\thispagestyle{plain}
		  \pagestyle{plain}
            \pagenumbering{arabic}

\maketitle

\begin{abstract}
Multi-Agent Path Finding (MAPF) seeks collision-free paths for multiple agents from their respective starting locations to their respective goal locations while minimizing path costs. Although many MAPF algorithms were developed and can handle up to thousands of agents, they usually rely on the assumption that each action of the agent takes a time unit, and the actions of all agents are synchronized in a sense that the actions of agents start at the same discrete time step, which may limit their use in practice. Only a few algorithms were developed to address asynchronous actions, and they all lie on one end of the spectrum, focusing on finding optimal solutions with limited scalability. This paper develops new planners that lie on the other end of the spectrum, trading off solution quality for scalability, by finding an unbounded sub-optimal solution for many agents.
Our method leverages both search methods (LSS) in handling asynchronous actions and rule-based planning methods (PIBT) for MAPF. We analyze the properties of our method and test it against several baselines with up to 1000 agents in various maps.
Given a runtime limit, our method can handle an order of magnitude more agents than the baselines with about 25\% longer makespan.

\end{abstract}


\begin{links}
    \link{Code}{https://github.com/rap-lab-org/public_LSRP}
    \link{Extended version}{https://arxiv.org/abs/2412.11678}
\end{links}

\section{Introduction}

Multi-Agent Path Finding (MAPF) computes collision-free paths for multiple agents from their starting locations to destinations within a shared environment, while minimizing the path costs, which arises in applications such as warehouse logistics.
Usually the environment is represented by a graph, where vertices represent the location that the agent can reach, and edges represent the transition between two locations.
MAPF is NP-hard to solve to optimality~\cite{yu2013structure}, and a variety of MAPF planners were developed, ranging from optimal planners~\cite{sharon2015conflict,wagner2015subdimensional}, bounded sub-optimal planners~\cite{barer2014suboptimal,li2021eecbs} to unbounded sub-optimal planners~\cite{okumura2022priority,de2013push}.
These planners often rely on the assumption that each action of any agent takes the same duration, i.e., a time unit, and the actions of all agents are synchronized, in a sense that, the action of each agent starts at the same discrete time step.
This assumption limits the application of MAPF planners, especially when the agent speeds are different or an agent has to vary its speed when going through different edges.

To get rid of this assumption on synchronous actions, MAPF variants such as Continuous-Time MAPF~\cite{ANDREYCHUK2022103662}, MAPF with Asynchronous Actions~\cite{ren2021loosely}, MAPF$_R$~\cite{walker2018extended} were proposed, and only a few algorithms were developed to solve these problems.
Most of these algorithms lie on one end of the spectrum, finding optimal or bounded sub-optimal solutions at the cost of limited scalability as the number of agents grows.
To name a few, Continuous-Time Conflict-Based Search (CCBS)~\cite{ANDREYCHUK2022103662} extends the well-known Conflict-Based Search (CBS) to handle various action times and is able to find an optimal solution to the problem.
Loosely Synchronized Search (LSS)~\cite{ren2021loosely} extends A* and M*~\cite{wagner2015subdimensional} to handle the problem and can be combined with heuristic inflation to obtain bounded sub-optimal solutions.
Although these algorithms can provide solution quality guarantees, they can handle only a small amount of agents ($\leq 100$) within a runtime limit of few minutes.
Currently, we are not aware of any algorithm that can scale up to hundreds of agents with asynchronous actions.
This paper seeks to develop new algorithms that lie on the other end of the spectrum, trading off completeness and solution optimality for scalability.

When all agents' actions are synchronous, the existing MAPF algorithms, such as rule-based planning~\cite{erdmann1987multiple,luna2011push,de2013push,okumura2022priority}, can readily scale up to thousands of agents by finding an unbounded sub-optimal solution.
However, extending them to handle asynchronous actions introduce additional challenges:
Rule-based planning usually relies on the notion of time step where all agents take actions and plans forward in a step-by-step fashion.
When the actions of agents are of various duration, there is no notion of planning steps, and one may have to plan multiple actions for a fast moving agent and only one action for a slow agent.
In other words, an action with long duration of an agent may affect multiple subsequent actions of another agent, which thus complicates the interaction among the agents.
To handle these challenges, we leverage the state space proposed in~\cite{ren2021loosely} where times are incorporated, and leverage Priority Inheritance with Backtracking (PIBT)~\cite{okumura2022priority}, a recent and fast rule-based planner, to this new state space.
We introduce a cache mechanism to loosely synchronize the actions of agents during the search, when these actions have different, yet close, starting times.
We therefore name our approach Loosely Synchronized Rule-based Planning (LSRP).

We analyze the theoretic properties of LSRP and show that LSRP guarantees reachability in graphs when the graph satisfies certain conditions.
For the experiments, we compare LSRP against several baselines including CCBS~\cite{ANDREYCHUK2022103662} and prioritized planning
in various maps from a MAPF benchmark \cite{stern2019multi}, and the results show that LSRP can solve up to an order of magnitude more agents than existing methods with low runtime,
despite about 25\% longer makespan.
Additionally, our asynchronous planning method produces better solutions, whose makespan ranges from 55\% to 90\% of those planned by ignoring the asynchronous actions.

\section{Problem Definition}\label{lsrp:probDefine}
Let set $I =\{1,2,\dots,N\}$ denote a set of $N$ agents.
All agents move in a workspace represented as a finite graph $G =(V,E)$, where the vertex set $V$ represents all possible locations of agents and the edge set $E \subseteq V \times V$ denotes the set of all the possible actions that can move an agent between a pair of vertices in $V$.
An edge between $u, v \in V$ is denoted as $(u,v)\in E$ and the cost of $e \in E$ is a finite positive real number $cost(e) \in \mathbb{R}^{+}$.
Let $v_s^i, v_g^i \in V$ respectively denote the start and goal location of agent $i$.

All agents share a global clock and start moving from $v_s^i$ at $t$ = 0. 
Let $D(i,v_1,v_2) \in \mathbb{R}^+$ denote amount of time (i.e., duration) for agent $i$ to go through edge $(v_1,v_2)$.
Note that different agents may have different duration when traversing the same edge, and the same agent may have different duration when traversing different edges.\footnote{
As a special case, a self-loop $(v,v), v\in V$ indicates a wait-in-place action of an agent and its duration $D(i,v,v)$ is the amount of waiting time at vertex $v$, which can be any non-negative real number and is to be determined by the planner.}

When agent $i$ goes through an edge $(v_1,v_2) \in E$ between times $(t_1, t_1 + D(i,v_1,v_2))$, agent $i$ is defined to occupy: (1) $v_1$ at $t = t_1$, (2) $v_2$ at $t = t_2$ and (3) both $v_1$,$v_2$ within the open interval  $(t_1, t_1 + D(i,v_1,v_2))$.
Two agents are in conflict if they occupy the same vertex at the same time.
We refer to this definition of conflict as the \emph{duration conflict} hereafter.
Fig.~\ref{fig:collision_model} provides an illustration.

\begin{figure}[tb]
    \centering
    \includegraphics[width=0.5\linewidth]{./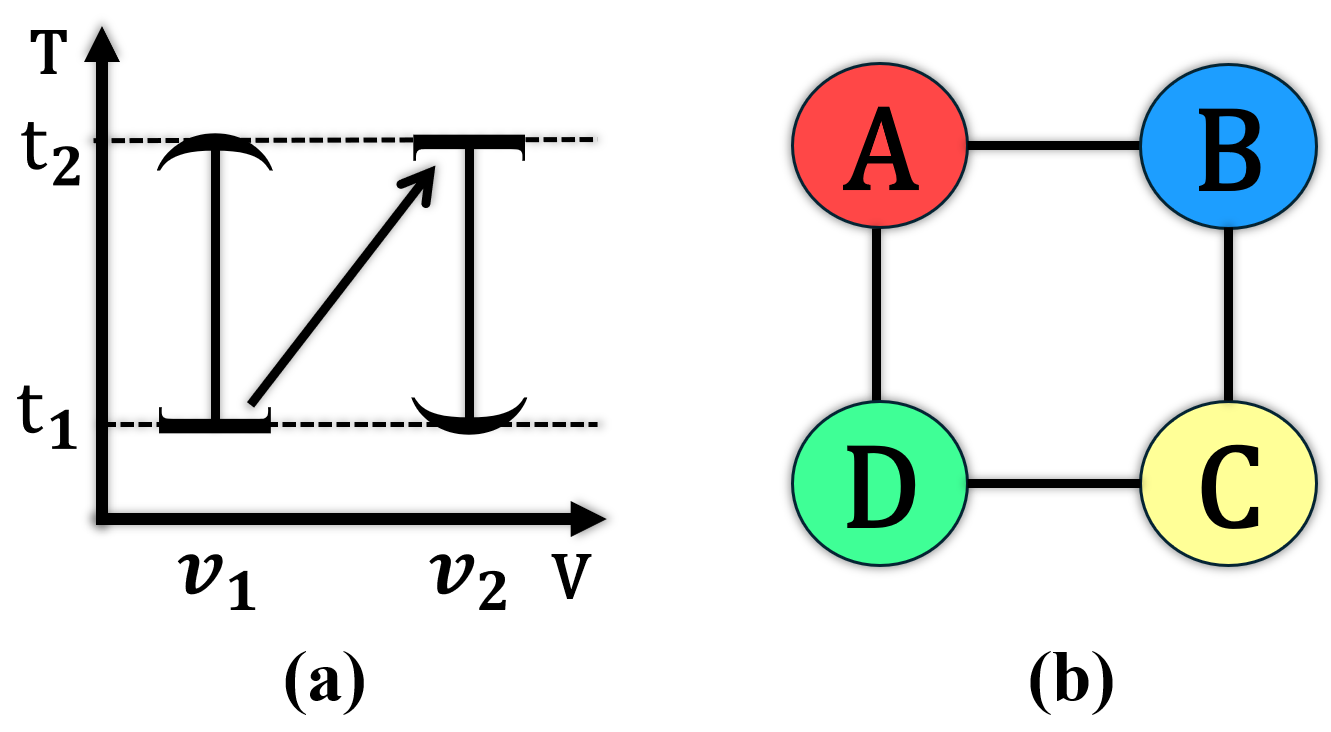}
    \caption{(a) shows the occupation of vertices when an agent traverses edge $(v_1,v_2)$ between times $(t_1,t_2)$ as shown by the black arrow. Round brackets represent open intervals, while square brackets represent closed intervals. The black vertical lines mean the vertices are occupied during the time range. (b) By duration conflict, the agents cannot move, while in conventional MAPF~\cite{stern2019multi}, the agents can move together clockwise or counter-clockwise.}
    \label{fig:collision_model}
\end{figure}

Let $\pi^{i}$ denote a path from $v_s^i$ to $v_g^i$ via a sequence of vertices $v \in G$. Any two vertices $v_{k}^i$ and $v_{k+1}^i$ in $\pi^{i}$ are either connected by edge $(v_{}^i,v_{+1}^i) \in E$ or is a self-loop.
Let $g(\pi^{i}(v_s^i,v_g^i))$ denote the cost of the path, which is defined as the sum of duration of edges along the path.
Let $\pi = (\pi^1,\pi^2,\dots,\pi^n)$ represent a joint path of all agents, and its cost is the sum of individual path costs of all the agents, i.e., $g(\pi) = \sum_{i}g(\pi^{i})$.

The goal of the Multi-Agent Path Finding with Asynchronous Actions (MAPF-AA) is to find a conflict-free joint path $\pi$ connecting $v_s^i,v_g^i$ for all agents $i \in I$, such that $g(\pi)$ reaches the minimum. This work seeks to develop algorithms that can quickly solve MAPF-AA instances with many agents by finding a (unbounded) sub-optimal solution.

\begin{remark}
    The same definition of occupation and conflict was introduced in \cite{ren2021loosely}, which generalizes the notion of following conflict and cycle conflict in \cite{stern2019multi}, and is similar to the ``mode'' in \cite{okumura2021time}.
    Conventional MAPF usually considers vertex and edge conflicts~\cite{sharon2015conflict,okumura2023lacam2}, which differ from the duration conflict here.
    Another related problem definition is MAPF with duration conflict (denoted as MAPF-DC), which replaces the vertex and edge conflict in MAPF with duration conflict.
    MAPF-DC is a special case of MAPF-AA where the duration is the same constant number for any agent and any edge.
\end{remark}

\section{Preliminaries}
\subsection{Priority Inheritance with Backtracking} \label{lsrp:pibt} 
Priority Inheritance with Backtracking (PIBT) plans the actions of the agents in a step-by-step manner until all agents reach their goals.
PIBT assigns each agent a changing priority value.
In each step, a planning function is called to plan the next action of the agents based on their current priorities.
This planning function selects actions based on the individual shortest path to the goal of each agent, and actions toward a location closer to the goal are first selected.
When two agents seek to occupy the same position, the higher-priority agent is able to take this location, and pushes the lower-priority agent to another less desired location. This function is applied recursively, where the pushed agent is planned next and inherits the priority of the pushing agent.
When all agents' actions are planned for the current time step, PIBT starts a new iteration to plan the next time step.

PIBT guarantees that the agent with the highest priority eventually reaches its goal, at which it becomes the lowest priority agent. Therefore, each agent becomes the highest priority agent at least once and is able to reach its goal at some time step.
PIBT requires that for each vertex $v\in G$, there is a cycle in $G$ containing $v$, so that PIBT can plan all agents to their goals.
Otherwise, PIBT is incomplete, i.e., PIBT may not be able to find a feasible solution even if the instance is solvable.
PIBT runs fast and can scale to many agents for MAPF.
Our LSRP leverages the idea of PIBT to handle a large number of agents.

\subsection{Loosely Synchronized Search} \label{lsrp:lss}
Loosely Synchronized Search (LSS) extends A* and M*-based approaches to solve MAPF-AA by introducing new search states that include both the locations and the action times (i.e., as timestamps) of the agents.
Similarly to A*, LSS iteratively selects states from an open list, expands the states to generate new states, prunes states that are in-conflict or less promising, and adds remaining states to open for future expansion, until a conflict-free joint path for all agents is found from the starts to the goals.
To expand a state, LSS only considers the agent(s) $i\in I$ with the smallest timestamps and plan its actions, which increases the timestamp of agent $i$.
In a future iteration, other agents $j\neq i$ will be planned if the timestamp of $j$ becomes the smallest.
Planning all agents together may lead to a large branching factor and LSS leverage M* to remedy this issue.
LSS is complete and finds an optimal solution for MAPF-AA but can only handle a relatively small amount of agents.
Our LSRP leverages the state definition and expansion in LSS to handle asynchronous actions.

\subsection{Other Related Approaches}\label{lsrp:other-related}
Safe Interval Path Planning (SIPP)~\cite{phillips2011sipp} is a single-agent graph search algorithm that can find a path from start to goal with the minimum arrival time among dynamic obstacles along known trajectories.
SIPP can be used together with priority-based planning to handle MAPF-AA.
Specifically, each agent is assigned with a unique priority, and the agents are planned from the highest priority to the lowest using SIPP, where the planned agents are treated as dynamic obstacles.
This priority-based method is used as a baseline in our experiments.

Additionally, CCBS~\cite{ANDREYCHUK2022103662} is a two-level search algorithm that can be used to handle MAPF-AA.
CCBS is similar to CBS~\cite{sharon2015conflict} for MAPF.
The high-level search detects conflicts between any pair of agents, and resolves conflicts by generating constraints that forbid an agent from using certain vertices within certain time ranges.
The low-level search uses SIPP to plan a single-agent path subject to the constraints added by the high-level.
CCBS iteratively detects conflicts on the high-level and resolves conflicts using the low-level search until no conflict is detect along the paths.
CCBS is guaranteed to find an optimal solution if the given problem instance is solvable.
In practice, CCBS can handle tens of agents within a few minutes~\cite{ANDREYCHUK2022103662}.
This paper uses CCBS as another baseline in the experiments.

\section{Method}\label{lsrp:lsrp}

\begin{algorithm}[tb]
\caption{LSRP, {\underline{LSRP-SWAP}}}

\small
\begin{algorithmic}[1]
\Require $ graph\; \mathcal{G},starts \; \{{ v_s^1},\ldots,{ v_s^n}\}, { goals}\; \{{ v_g^1},\ldots,{ v_g^n}\}$ 
\Ensure ${ paths}\; \{{ \pi^1}, \ldots,{ \pi^n}\}$ 
\State ${ T \gets \{0\}; \; S_{T} \gets \{s_0\}; \; \Phi \gets \; \{ \}}$
\State $\epsilon_0 \gets \textsc{InitPriority}()$ \label{alg:initialpri}
\State $\epsilon \gets \epsilon_0$

\While{$T \neq \emptyset$}
\State ${ s_{prev} \gets S_{T}.back()}$\label{alg:prev}
\If{$\forall i\in I, s_{prev}^i.v=v_g^i$}
\State \Return \textsc{PostProcess}($S_{T}$) \label{alg:postpro}
\EndIf
\For{$i \in I$}
\State\textbf{if}{ $ s_{prev}^i.v = v^i_g$ } \label{alg:resepri} \textbf{then} $\epsilon^i \gets \epsilon^i_0$
\State\textbf{else}  $\epsilon^i \gets \epsilon^i + 1$ \label{alg:resepriend}
\EndFor
\State $ t_{min}\gets T.pop()$
\State $ I_{curr}\gets $ \textsc{ExtractAgents}($I$, $t_{min}$, $s_{prev}$) \label{alg:extract}
\If{$T \neq \emptyset$} \label{alg:tnextbegin}
\State $t_{next} \gets T.top()$ \label{alg:ttop}
\Else
\State $t_{next} \gets t_{min} + \min_{i\in I, e\in E}{D(i,e)}$ \label{alg:tnextend}
\EndIf
\State $s_{next} \; \gets \textsc{Get\_Snext}(\Phi,I_{curr},s_{prev}) $ \label{alg:snext}
\For{$i \in I_{curr} \text{ in descending 
  order of } \epsilon^i$}
 \label{alg:Icurr}
    \If{$s_{next}^i = \emptyset$} \label{alg:nocache}
    \State {${\rm ASY\text{-}PUSH}(i, \{\},t_{min},t_{next},{\rm False})$}
    \State {\underline{(or ${\rm ASY\text{-}PUSH\text{-}SWAP}(i, \{\},t_{min},t_{next},{\rm False})$})}\label{alg:line:asy-push-swap}
    \EndIf
\EndFor
\State $S_{T}.append(s_{next})$
\State ${\rm Add \,} s_{prev}^i.t_v\ {\rm for\,} i \in I {\rm\ to\,}\ T$
\EndWhile
\State \textbf{return} failure
\end{algorithmic}\label{lsrp:alg:lsrp}
\end{algorithm}

\subsection{Notation and State Definition} \label{lsrp:sec:notation}
Let $\mathcal{G}=(\mathcal{V},\mathcal{E}) = G \times G \times \dots \times G$ denote the joint graph, the Cartesian product of $N$ copies of $G$, where $v \in \mathcal{V}$ represents a joint vertex, and $e \in \mathcal{E}$ represents a joint edge that connects a pair of joint vertices.
The joint vertices corresponding to the start and goal vertices of all the agents are $v_{s} = (v_{s}^1,v_{s}^2,\cdots,v_{s}^n)$ and $v_{g} = (v_{g}^1,v_{g}^2,\cdots,v_{g}^n)$ respectively. A joint search state~\cite{ren2021loosely} is $s = (s^1,s^2,\dots,s^n)$, where $s^i$ is the individual state of agent $i$, which consists of four components: (1) $p \in V$, a (parent) vertex in $G$, from which the agent $i$ begin its action; (2) $v \in V$, a vertex in $G$, at which the agent $i$ arrives; (3) $t_{p} $, the timestamp of $p$, representing the departure from $p$; (4) $t_{v} $, the timestamp of $v$, representing the arrival time at $v$.

An individual state $s^i = (p,v,t_{p},t_{v})$ describe the location occupied by agent $i$ within time interval $[t_{p},t_s]$ with a pair of vertices $(p,v)$.
Intuitively, an individual state is also an action of agent $i$, where $i$ moves from vertex $p$ to $v$ between timestamps $t_{p}$ and $t_{v}$.
For the initial state $s_0$, we define $p = v = v_{s}^i$ and $t_{p} = t_{v} = 0, \forall i \in I$.
Let $\epsilon = (\epsilon^1,\epsilon^2,\cdots,\epsilon^N)$ denote the priorities of the agents, which are positive real numbers in $[0,1]$.
In this paper, we use the dot operator ($.$) to retrieve the element inside the variable (e.g. $s^i.t_v$ denote the arrival time $t_v$ of agent $i$ in the individual state $s^i$).

\subsection{Algorithm Overview}\label{lsrp:sec:algo_overview}

LSRP is shown in Alg.~\ref{lsrp:alg:lsrp}.
Let $T$ denote a list of timestamps, and LSRP initializes $T$ as a list containing only $0$, the starting timestamp of all agents.
Let $S_T$ denote a list of joint states, which initially contains only the initial joint state.
Let $\Phi$ denote a dictionary where the keys are the timestamps and the values are joint states.
$\Phi$ is used to cache the planned actions of the agents and will be explained later.
Then, LSRP initializes the priorities of the agents (Line \ref{alg:initialpri}), which can be set in different ways (such as using random numbers).

LSRP plans the actions of the agents iteratively until all agents reach their goals.
LSRP searches in a depth-first fashion.
In each iteration, LSRP retrieves the most recent joint state that was added to $S_T$, and denote it as $s_{prev}$ (Line \ref{alg:prev}).
If all agents have reached their goals in $s_{prev}$, $S_T$ now stores a list of joint states that brings all agents from $v_s$ to $v_g$. LSRP thus builds a joint path out of $S_T$, which is returned as the solution (Line \ref{alg:postpro}), and then terminates.
Otherwise, LSRP resets the priorities of the agents that have reached their goals, and increase the priority by one for agents that have not reached their goals yet (Lines \ref{alg:resepri}-\ref{alg:resepriend}).

Then, LSRP extracts the next planning timestamp $t_{min}$ from $T$, which is the minimum timestamp in $T$, and extracts the subset of agents $I_{curr} \subseteq I$ so that for each agent $i \in I$, its arrival timestamp $t^i_v$ in $s_{prev}$ is equal to $t_{min}$ (Line~\ref{alg:extract}).
Intuitively, the agents in $I_{curr}$ needs to determine their next actions at time $t_{min}$.
Here, $t_{next}$ is assigned to be the next planning timestamp in $T$ if $T$ is not empty.
Otherwise ($T$ is empty), $t_{next}$ is set to be $t_{min}$ plus the a small amount of time (Lines \ref{alg:tnextbegin}-\ref{alg:tnextend}).
To generate the next joint state based on $s_{prev}$, LSRP first checks if the actions of agents in $I_{curr}$ have been planned and cached in $\Phi$.
If so, these cached actions are used and the corresponding individual states are generated and added to $s_{next}$ (Line \ref{alg:snext}).
For agents without cached actions (Line \ref{alg:nocache}), LSRP invokes a procedure \asypibt to plan the next action for that agent.

Finally, the generated next joint state $s_{next}$ is appended to the end of $S_T$, and the arrival timestamp $s^i_{prev}.t_v$ of each agent $i \in I$ is added to $T$ for future planning.

\subsection{Recursive Asynchronous Push}\label{lsrp:sec:async_pibt}

\asypibt takes the following inputs: $i$, the agent to be planned; $ban$, a list of vertices that agent $i$ is banned from moving to; $t$, the current timestamp; $t_{next}$, the next timestamp; and $bp$, a boolean value indicating if agent $i$ is being pushed away by other agents, which stands for ``be pushed''.

\begin{algorithm}[tb]

\small

\caption{$\rm ASY\text{-}PUSH, \underline{ASY\text{-}PUSH\text{-}SWAP}$}
\begin{algorithmic}[1]
\Require $i, ban,t,t_{next},bp$
\Notation $v^i \leftarrow s_{prev}^i.v$
    \State $C \leftarrow \text{Neigh}(v^i) \cup \{v^i\}$ \label{alg2:neigh}
    \State sort $C$ in increasing order of $\text{dist}(u, v_g^i)$ where $u \in C$ \label{alg2:sortneigh}
    \State $\underline{ j \leftarrow \textsc{Swap-Required-possible}(i,C[0])} $ \label{alg2:swap}
    \State \underline{\textbf{if }$ j \neq \emptyset {\rm \textbf{ then }}C.reverse()$} \label{alg2:creverse}
    \If{$\epsilon^i > \epsilon^n (n \neq i) \forall n \in I$} \label{alg2:reachbility}
    \State $C.move(v^i, 1)$ \label{alg2:reachend}\Comment{Move $v^i$ to second}
    \EndIf

    \For{$v \in C$} \label{alg2:main}
        \State \textbf{if} ${  \textsc{Occupied}(v,s_{next},ban,bp)}{\rm \textbf{ then continue}}$ \label{alg2:occupied}
        \State ${k \leftarrow \textsc{Push-Required}()}$ \label{alg2:push_required}
        \If{${k \neq \emptyset}$}
            \State $ban.append(v^i)$
            \State ${  t_{wait}^i\leftarrow\; \asypibt(k,ban,t,t_{next},{\rm True})}$ \label{alg2:asy_pibt}
            \State ${\textbf{if} \;t_{wait}^i = \emptyset\;\textbf{then continue}}$\label{alg2:pushfail}
            \State${ \textsc{WaitAndMove}(i,v,t_{wait}^i,s_{next},\Phi)}$ \label{alg2:phi}
            \State ${t_{move}^i \leftarrow t_{wait}^i + D(i,v^i,v)}$
            \If{$ \underline{ (!bp) \wedge v = C[0] \wedge j \neq \emptyset \wedge s_{next}^j = \emptyset}$} \label{alg2:kexswap}
            \State$\underline{ \textsc{WaitAndMove}(j,v^i,t_{move}^i,s_{next},\Phi)}$ \label{alg2:kexswapend}
            \EndIf
            \State \Return $t_{move}^i$ \label{alg2:kexend}
        \EndIf    
         \If {$v = v^i$} \label{alg2:nokexist}
        \State ${ s_{next}^i \;\leftarrow \;state(v^i,v,t,t_{next})}$
        \State \Return \label{alg2:waitend}
        \EndIf
        \State $t_{move}^i \leftarrow t + D(i,v^i,v)$ 
         \State ${ s_{next}^i \;\leftarrow \;state(v^i,v,t,t_{move}^i)}$ \label{alg2:snext generated}
         \If{$ \underline{ (!bp) \wedge v = C[0] \wedge j \neq \emptyset \wedge s_{next}^j = \emptyset}$} \label{alg2:knexswap}
            \State$\underline{ \textsc{WaitAndMove}(j,v^i,t_{move}^i,s_{next},\Phi)}$ \label{alg2:knexswapend}
        \EndIf
        \State \Return $t_{move}^i$ \label{alg2:async_end}
    \EndFor
    \State \Return $\emptyset$  \label{alg2:unpushable}
\end{algorithmic}
\label{lsrp:alg:asy_pibt}
\end{algorithm} 

\begin{algorithm}[tb]


\caption{$\textsc{WaitAndMove}$}
    \begin{algorithmic}[1]
    \Require $i,v,t_{wait}^i,s_{prev},s_{next},\Phi$
    \Notation $v^i\leftarrow s_{prev}^i.v$
    \State ${s_{next}^i \;\leftarrow\;state(v^i,v^i,t,t_{wait}^i)}$ \label{alg3:wait}
    \State ${t_{move}^i \leftarrow t_{wait}^i + D(i,v^i,v)}$\label{alg3:caculate}
    \State Add ${ state(v^i,v,t_{wait}^i,t_{move}^i)}$ to $\Phi$ \label{alg3:store}
    \State \Return
    \end{algorithmic} \label{lsrp:alg:waitandmove}
\end{algorithm}
At the beginning, \asypibt identifies all adjacent vertices $C$ that agent $i$ can reach from its current vertex in $G$.
These vertices in $C$ are sorted based on its distance to the agent's goal (Line \ref{alg2:neigh}-\ref{alg2:sortneigh}, Alg.\ref{lsrp:alg:asy_pibt}).
Then, for each of these vertices from the closest to the furthest, \asypibt checks whether the agent can move to that vertex without running into conflicts with other agents, based on the occupancy status of that vertex (Line \ref{alg2:main}-\ref{alg2:unpushable}, Alg.\ref{lsrp:alg:asy_pibt}).
\asypibt stops as soon as a valid vertex (i.e., a vertex that is unoccupied or can be made unoccupied through the push operation) is found.
Specifically, between lines \ref{alg2:main}-\ref{alg2:unpushable} in Alg.\ref{lsrp:alg:asy_pibt}, \asypibt may run into one of the following three cases:

    \noindent\textbf{Case 1} Occupied (Line \ref{alg2:occupied}): The procedure \textsc{Occupied} returns true if either one of the following three conditions hold. (1) Vertex $v$ is inside $ban$, which means agent $i$ cannot be pushed to $v$. (2) $v$ is occupied by another agent $i'$ and $i'$ either has been planned or $i'$ is not in $I_{curr}$. (3) $bp$ is true (which indicates that agent $i$ is to be pushed away) and $v$ is the vertex currently occupied by agent $i$ (i.e., $v = s^i_{prev}.v$).
    When \textsc{Occupied} returns true, agent $i$ cannot move to $v$. Then, \asypibt ends the current iteration of the for-loop, and checks the next vertex in $C$.
    
    \noindent\textbf{Case 2} Pushable (Line \ref{alg2:push_required}-\ref{alg2:kexend}): \asypibt invokes \textsc{Push-Required} to find if vertex $v$ is occupied by another agent $k$ that satisfy the following two conditions: (1) $k \in I_{curr}$ and (2) the action of $k$ has not yet been planned (Line \ref{alg2:push_required}).
    If no such a $k$ is found, \asypibt goes to the ``Unoccupied'' case as explained next.
    If such a $k$ is found, the current vertex $v^i$ occupied by agent $i$ is added to the list $ban$ so that agent $k$ will not try to push agent $i$ in a future recursive call of \asypibt, which can thus prevent cyclic push in the recursive \asypibt calls.
    Then, a recursive call of \asypibt on agent $k$ is invoked and the input argument $bp$ is marked true, indicating that agent $k$ is pushed by some other agent.
    $bp$ is also used in SWAP-related procedures (Line \ref{alg2:swap}-\ref{alg2:creverse},\ref{alg2:kexswap}-\ref{alg2:kexswapend} and \ref{alg2:knexswap}-\ref{alg2:knexswapend}), which will be explained later.
    This recursive call returns the timestamp when agent $k$ finished its next action, and agent $i$ has to wait till this timestamp, which is denoted as $t^i_{wait}$.
    Given $t^i_{wait}$, the procedure \textsc{WaitAndMove} is invoked to add both the wait action and the subsequent move action of agent $i$ into $\Phi$, the dictionary storing all cached actions.
    Then, the timestamp $t^i_{move}$ when agent $i$ reaches $v$ is computed (Line~\ref{alg2:kexend} ) and returned.

    \noindent\textbf{Case 3} Unoccupied (Line \ref{alg2:nokexist}-\ref{alg2:async_end}): Vertex $v$ is valid for agent $i$ to move into and a successor individual state $s_{next}^i$ for agent $i$ is generated (Line~\ref{alg2:snext generated}). Then, \asypibt returns the timestamp $t^i_{move}$ when agent $i$ arrives at $v$ (Line \ref{alg2:async_end}).

\subsubsection{Cache Future Actions}
During the search, $\Phi$ caches the planned actions of the agents, and is updated in Alg.\ref{lsrp:alg:waitandmove}, which takes an agent $i\in I$, a vertex $v\in V$, the timestamp $t^i_{wait}$ that agent $i$ needs to wait before moving as the input.
Alg.\ref{lsrp:alg:waitandmove} first generates the corresponding individual state $s_{next}^i$ where agent $i$ waits in place till $t^i_{wait}$(Line \ref{alg3:wait}), and then calculates time $t^i_{move}$ when agent $i$ reaches $v$ after the wait (Line \ref{alg3:caculate}).
Finally, the future individual state corresponding to the move action of agent $i$ from $v^i$ to $v$ between timestamps $[t^i_{wait}, t^i_{move}]$ is generated and stored in $\Phi$.

\subsubsection{Toy Example}
 Fig.~\ref{fig:Lsrp} shows an example in an undirected graph with three agents $I=\{1,2,3\}$ corresponding to the yellow, blue, and red circles. The duration for the agents to go through any edge is 1, 2, 3 respectively. $S_{0}$ shows the initial individual states.
 The initial priorities of the agents are set to $\{0.99, 0.66, 0.33\}$ respectively.
 At timestamp $t_{min} = 0$, $I_{curr}$ includes all three agents.
 LSRP calls \asypibt based on their priorities from the largest to the smallest.

LSRP calls \asypibt with $i=1$, and agent $1$ attempts to move to vertex D as D is the closest to its goal. $D$ is now occupied by agent $2$, which leads to a recursive call on \asypibt with $i=2$ to check if agent $2$ can be pushed away (Line \ref{alg2:asy_pibt}, Alg.\ref{lsrp:alg:asy_pibt}). In this recursive call on $i=2$, agent $2$ tries to move to vertex B as B is the closest to its goal, which is occupied by agent $3$, and another recursive call on \asypibt with $i=3$ is conducted.
In this recursive call on \asypibt with $i=3$, agent $3$ finds its goal vertex C is unoccupied, and $s_{next}^3 = (B, C, 0, 3)$ is generated, which is shown in Fig.~\ref{fig:Lsrp}(b). Here, the returned timestamp is $t_{move}^3 = 3$, which is the timestamp when agent $3$ arrives vertex C.
Then, the call on \asypibt with $i=2$ gets this returned timestamp and set it as $t^2_{wait} = 3$, the timestamp that agent $2$ needs to wait till before moving to vertex D.
An individual state $s_{next}^2=(D,D,0,3)$ is generated, which is $s^2_{next}$ in Fig.~\ref{fig:Lsrp}(b), and another individual state $(D,B,3,5)$ is created and stored in $\Phi$, which corresponds to the move action of agent $2$ from $D$ to $B$.
Finally, the call on \asypibt with $i=1$ receives $t_{wait}^1 = 5$, which is the timestamp that agent $1$ has to wait till before moving to $D$.
An individual state $s_{next}^1=(E,E,0,5)$ is generated, corresponding to the wait action of agent $1$, which is $s^1_{next}$ in Fig.~\ref{fig:Lsrp}(b).
Then another individual state $(E,D,5,6)$ is also created and stored in $\Phi$ as the future move action of agent $1$.
Finally, all three agents are planned and $s_{next}$ is added to  $S_T$.
The arrival timestamps $t_v$ in any individual state in $s_{next}$ (i.e., \{3,5\}) are added to $T$ for future planning.
    
In the next iteration of LSRP, the planning timestamp is $t_{min} = 3$ and $I_{curr}$ is $\{2,3\}$, as agents $i=2,3$ ends their previous actions at $t = 3$.
To plan the next action of agents $2,3$, LSRP retrieves the move action from $\Phi$ and set $s_{next}^2=(D,B,3,5)$ for agent $2$, which is $s^2_{next}$ shown in Fig.~\ref{fig:Lsrp}(c). 
Agent $3$ has no cached action in $\Phi$ and is planned by calling \asypibt.
Since agent $3$ is at its goal, \asypibt set agent $3$ wait in vertex C until $t_{next}$ by setting $s_{next}^3=(C,C,3,5)$ (Line \ref{alg2:nokexist}-\ref{alg2:waitend},Alg.\ref{lsrp:alg:asy_pibt}). 
All agents in $I_{curr}$ are now planned, and $S_T$,$T$ are updated.
LSRP ends this iteration and proceeds.
The third planning timestamp is $t_{min} = 5$ and $I_{curr}$ is $\{1,2,3\}$.
LSRP plans in a similar way,
and the movement of the agents is shown in Fig.~\ref{fig:Lsrp}(d).
The last timestamp is $t_{min} = 6$. In this iteration, in $s_{prev}$, all agents have reached their goals and LSRP terminates.

\begin{figure}[tb]
    \centering
    \includegraphics[width=.9\linewidth]{./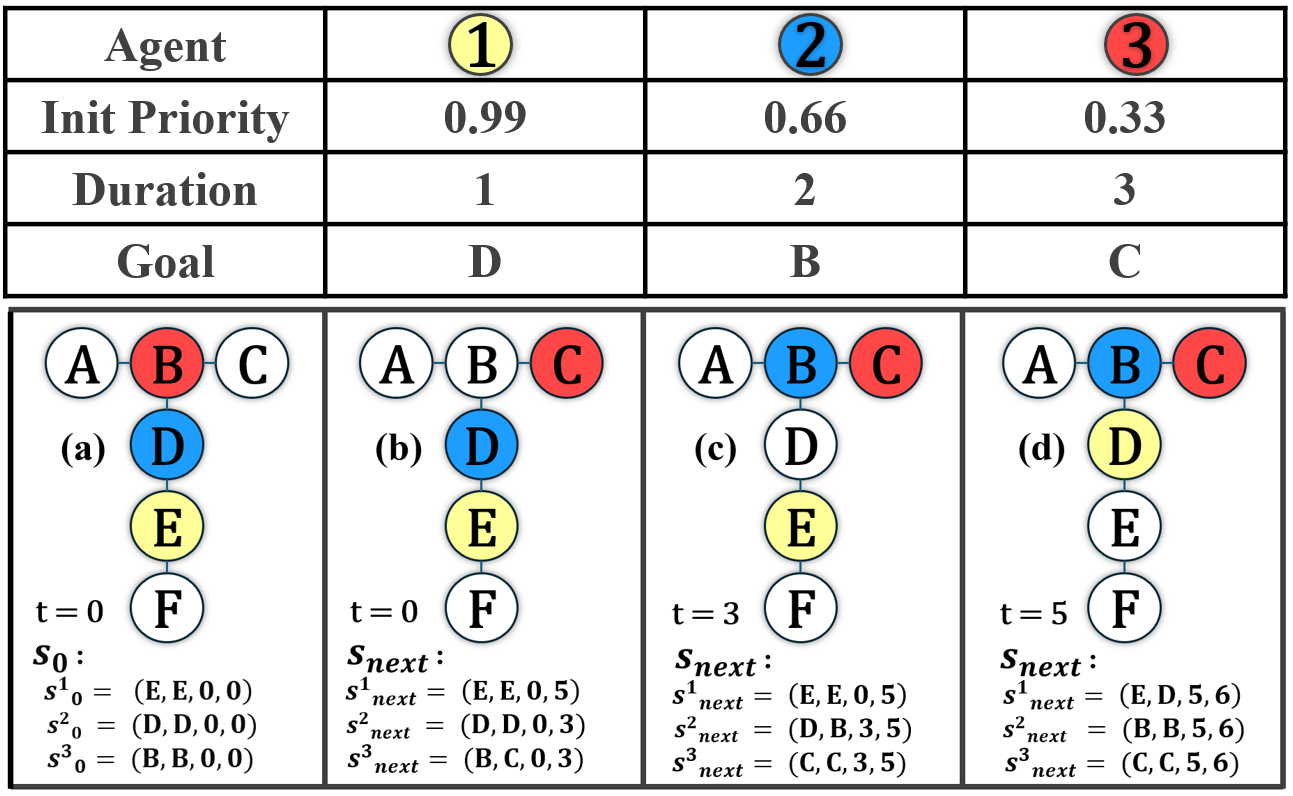}
    \caption{A toy example illustrating LSRP.}
    \label{fig:Lsrp}
\end{figure}

\subsection{Relationship to PIBT and Causal PIBT}\label{lsrp:sec:retopibt}
LSRP differs from PIBT in the following three aspects: \textbf{First}, PIBT solves MAPF where vertex and edge conflicts are considered. When one seeks to use PIBT-like approach to solve MAPF-DC (with duration conflict), the wait action may need to be considered in a similar way as LSRP does. Specifically, when two agents $i,j \in I_{curr}$ compete for the same vertex $v$, the higher-priority agent starts to move to $v$ until the lower-priority agent is pushed away from $v$ and reaches a less desired vertex $u$. Here, the wait and the move action of $i$ are planned together and the move action is cached in $\Phi$ for future execution. In PIBT, agent $i$ can move to $v$ that is currently occupied by $j$ as soon as $j$ leaves $v$, and there is no need for agent $i$ to wait and cache the move action.
\textbf{Second}, different from MAPF-DC, MAPF-AA has various durations. As a result, in each iteration (with planning timestamp $t$), LSRP plans for agents whose arrival time is equal to $t$, instead of planning all agents as PIBT does.
\textbf{Third}, LSRP also introduces a swap operation for MAPF-AA, which is demonstrated in Sec.~\ref{lsrp:sec:swap}.

Causal PIBT~\cite{okumura2021time} extends PIBT to handle delays caused by imperfect execution of the planned path for MAPF.
In the time-independent planning problem, due to the possible delays, the action duration is unknown until the action is finished. Causal PIBT thus plans ``passively'' by recording the dependency among the agents during execution using tree-like data structure. When one agent moves, Causal PIBT signals all the related agents based on their dependency. In MAPF-AA, the action duration is known, and LSRP thus plans ``actively'' by looking ahead into the future given the action duration of the agents, and caches the planned actions when needed. The fundamental ideas in LSRP and Causal PIBT are related and similar, but the algorithms are different since they are solving different problems.

\section{Analysis} \label{lsrp:analysis}
This section discusses the compromise on completeness for scalability in LSRP.
We borrow two concepts from Causal-PIBT~\cite{okumura2021time}:
\begin{itemize}
    \item \emph{Strong termination}: there is time point where all agents are at their goals.
    \item \emph{Weak termination}: all agents have reached their goals at least once. 
\end{itemize}
LSRP only guarantees \emph{weak termination} under conditions.
In this section, the words `reach goal' means that an agent reaches its goal location but may not stay there afterwards.

Given a graph $G$, a cycle $\{v_1,v_2,\dots,v_\ell,v_1\}$ is a special path that starts and ends at the same vertex $v_1$.
The length of a path (or a cycle) is the number of vertices in it.
Given a graph $G$ and $N$ agents, if there exists a cycle of length $\geq
N + 1$ for all pairs of adjacent vertices in $G$, then we call this graph a \emph{c-graph}.
Intuitively, for a c-graph, LSRP guarantees that the agent with highest priority can push away any other agents that block its way to its goal, and reaches its goal within finite time.
Once the agent reaches its goal, its priority is reset and thus becomes a small value, the priority of another agent becomes the highest and can move towards its goal.
As a result, all agents are able to reach their goals at a certain time.
Note that LSRP initializes all agents with a unique priority, and all agents' priorities are increased by one in each iteration of LSRP, unless the agent has reached its goal.
Let $i_{*} \in I$ denote the agent with the highest priority when initializing LSRP.
Let $D_{max}$ denote the largest duration for any agent and any edge. 

\begin{figure}[tb]
    \centering
    \includegraphics[width=0.7\linewidth]{./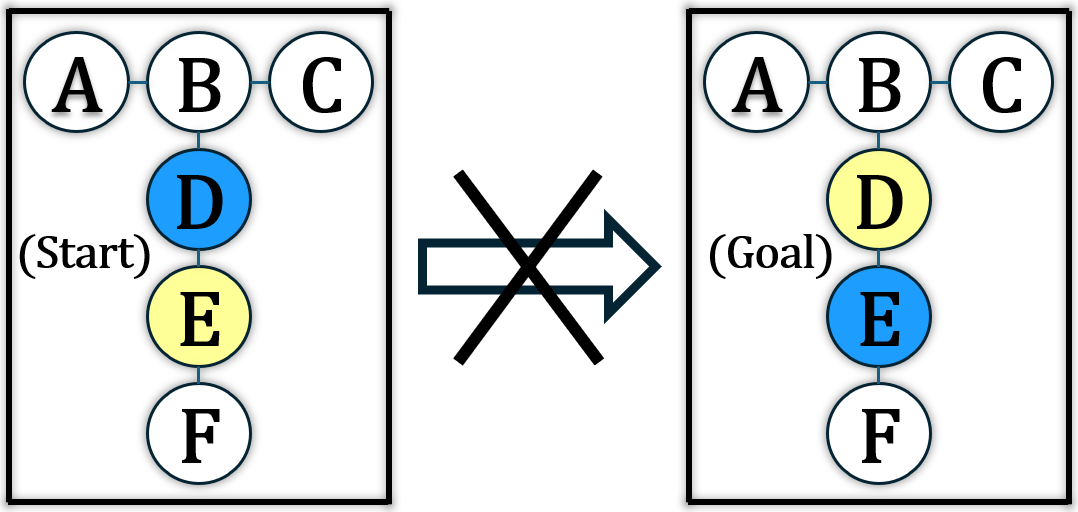}
    \caption{An example of LSRP never terminating with only \asypibt. 
    The blue and yellow agents will repeatedly push each other
    but will never successfully swap their locations to reach their goals.
    }
    \label{fig:failure}
\end{figure}

\begin{theorem}\label{lsrp:thm1}
In a c-graph, in LSRP, when $i_* \in I_{curr}$, let $v^*$ denote the nearest vertex from $v^{i_*}_g$ among all vertices in $C$, then $i_*$ can reach $v^*$ within time $N \cdot D_{max}$. 
\end{theorem}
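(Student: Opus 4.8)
The plan is to follow the execution of \asypibt from the moment it is called on $i_*$ and to bound both the number of agents it recursively pushes and the total time their moves consume. First I would exploit that $i_*$ is uniquely highest-priority: the test on Line~\ref{alg2:reachbility} of Alg.~\ref{lsrp:alg:asy_pibt} then fires, demoting $v^{i_*}$ to second place in $C$. Since $v^*$ is by definition the vertex of $C$ closest to $v_g^{i_*}$ and, whenever $i_*$ is not already at its goal, is strictly closer than $v^{i_*}$, it stays at $C[0]$ and is the first vertex \asypibt tries. The claim therefore reduces to showing that this attempt succeeds and that the induced wait-and-move cascade delivers $i_*$ to $v^*$ within $N\cdot D_{max}$.

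Next I would show the recursive push returns a finite timestamp rather than $\emptyset$; here the c-graph hypothesis is essential. As $v^*$ is adjacent to $v^{i_*}$, the pair $(v^{i_*},v^*)$ lies on a cycle $\mathcal{C}$ of length $\geq N+1$; with only $N$ agents present, at least one vertex of $\mathcal{C}$ is unoccupied. I would then construct an explicit legal cascade that pushes the occupant of $v^*$ one step forward along $\mathcal{C}$, recursing until the first free cycle vertex is reached, with each agent departing only after its successor has vacated. Because \asypibt scans every vertex of $C$ and falls through to the next candidate whenever a recursive push returns $\emptyset$ (Line~\ref{alg2:pushfail}), its execution is a depth-first search over all push chains; the existence of the cycle cascade then guarantees it returns a non-empty result for $i_*$.

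Then I would bound the realized cascade. The $ban$ list supplies the invariant: before recursing, a pusher appends its own vertex to $ban$, so every agent already on the chain sits at a banned vertex while the next pushed agent sits at a vertex outside $ban$; hence the chain consists of pairwise distinct agents, at most $N$ of them. For the timing, I would unwind the recursion from its deepest agent: by \textsc{WaitAndMove} and the return on Line~\ref{alg2:kexend}, each agent's arrival timestamp equals its successor's arrival timestamp plus its own traversal duration, so $i_*$'s arrival at $v^*$ is $t_{min}$ plus the sum of the (at most $N$) traversal durations along the chain. As each duration is at most $D_{max}$, this arrival is at most $t_{min}+N\cdot D_{max}$; the portions of the cascade occurring after $t_{min}$ are committed to $\Phi$, hence executed in later iterations rather than overridden.

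I expect the genuine difficulty to lie in the existence step, specifically in reconciling the length-$(N{+}1)$ cycle with the agents that \asypibt cannot push. Only members of $I_{curr}$ are pushable, whereas an agent that is mid-traversal at $t_{min}$ is not in $I_{curr}$ and, under the duration-conflict model, blocks \emph{both} endpoints of its current edge, so as many as $2(N-|I_{curr}|)$ vertices may be occupied and a naive pigeonhole on $N+1$ cycle vertices need not expose a reachable free vertex. I would resolve this either by showing the forward cascade still reaches a free cycle vertex despite such obstacles, or---more plausibly---by charging the time $i_*$ spends waiting for in-transit agents to arrive and join $I_{curr}$ against the same budget: there are at most $N$ agents and each is resolved within $D_{max}$, which keeps the total within $N\cdot D_{max}$. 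Making this accounting precise, together with verifying that the backtracking in \asypibt remains exhaustive even though $ban$ is mutated across failed branches, is the crux of the argument.
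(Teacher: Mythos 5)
Your proposal follows essentially the same route as the paper's own (sketch) proof: the c-graph property gives a cycle of length at least $N+1$ through $i_*$'s current vertex, pigeonhole yields an unoccupied vertex on that cycle, and $i_*$'s arrival at $v^*$ is bounded by a chain of at most $N$ sequential pushes, each costing at most $D_{max}$. The difficulty you single out as the crux---agents in mid-traversal that are not in $I_{curr}$, cannot be pushed, and under the duration-conflict model block both endpoints of their edge---is not resolved by the paper either; its sketch simply asserts that in the worst case all agents lie on the cycle and are pushed one after another, so your account is, if anything, more explicit about where the argument remains incomplete.
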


\begin{proof}
    In each iteration, LSRP extracts the minimum timestamp $t_{min}$ from $T$, and at the end of the iteration, the newly added timestamps to $T$ must increase and be greater than $t_{min}$.
    As a result, $t_{min}$ keeps increasing as LSRP iterates and all agents are planned.
    Now, consider the iterations of LSRP where agent $i_* \in I_{curr}$ and is planned.
    In \asypibt, $v^*$ is check at first. If no other agent ($k$) occupies $v^*$, then $i_*$ reaches $v^*$ with a duration that is no larger than $D_{max}$.
    Otherwise (i.e., another agent $k$ occupies $v^*$), agent $i_*$ seeks to push $k$ to another vertices which may or may not be occupied by a third agent $k'$.
    Since the graph is a c-graph, there must be at least one unoccupied vertex in the cycle containing the $s^{i_*}_{prev}.v$, the current vertex occupied by $i_*$.
    In the worst case, all agents are inside this cycle and $i_*$ has to push all other agents before $i_*$ can reach $v_*$, which takes time at most $N\cdot D_{max}$, where the agent moves to its subsequent vertex in this cycle one after another.
\end{proof}

Let $diam(G)$ denote the diameter of $G$, the length of the longest path between any pair of vertices in $G$.

\begin{theorem}\label{lsrp:thm2}
    For a c-graph, LSRP (Alg. \ref{lsrp:alg:lsrp}) returns a set of conflict-free paths such that for any agent $i \in I$ reaches its goal at a timestamp $t$ with $t \leq \text{diam}(G) \cdot N^2 \cdot D_{max}$.
\end{theorem}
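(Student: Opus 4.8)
The plan is to build directly on Theorem~\ref{lsrp:thm1} and couple it with the priority-reset mechanism of Lines~\ref{alg:resepri}--\ref{alg:resepriend} to obtain a global bound. First I would prove a per-agent progress statement: whenever the current highest-priority agent $i_*$ is planned, Theorem~\ref{lsrp:thm1} guarantees it reaches $v^*$, the neighbor in $C$ closest to $v^{i_*}_g$, within time $N \cdot D_{max}$. Because $i_*$ is not yet at its goal, some neighbor lies one step closer to $v^{i_*}_g$ along a shortest path and belongs to $C$, so moving to $v^*$ strictly decreases the graph distance from $i_*$ to its goal by at least one. Since this distance is at most $diam(G)$, at most $diam(G)$ such moves are needed. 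Throughout these moves $i_*$ stays the highest-priority agent: its priority keeps incrementing (it has not reached its goal), while all agents increment at the same rate so the relative order among unfinished agents is preserved and finished agents only reset downward. Hence $i_*$ first reaches its goal within $diam(G)\cdot N \cdot D_{max}$.

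Next I would establish a round-robin property of the priorities. When an agent reaches its goal at some iteration $r\geq 1$, its priority resets into $[0,1]$ (Line~\ref{alg:resepri}); $m$ iterations later it is at most $\epsilon_0 + m < m+1$. An agent that has never reached its goal has, by the same iteration, been incremented $r+m$ times, giving priority at least $r+m \geq m+1$. A short comparison of $\epsilon_0$ offsets (all distinct and in $[0,1]$) shows the unfinished agent is strictly higher. Thus a reset agent cannot seize the top slot ahead of any agent that has never reached its goal, so the agents take turns as the highest-priority agent, and each agent first reaches its goal before any agent begins a second turn. With $N$ agents and each turn costing at most $diam(G)\cdot N \cdot D_{max}$ by the previous paragraph, every agent first reaches its goal within $N\cdot diam(G)\cdot N \cdot D_{max} = diam(G)\cdot N^2 \cdot D_{max}$, which is the claimed bound on the first-reach timestamp.

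For conflict-freeness I would invoke the \textsc{Occupied} test (Case~1) in \asypibt: an agent is never assigned a vertex currently held by a planned agent or by an agent outside $I_{curr}$, and the wait-then-move actions cached in $\Phi$ (Alg.~\ref{lsrp:alg:waitandmove}) are timed so that an agent enters a vertex only after its occupant has departed. This rules out the duration conflict of Sec.~\ref{lsrp:probDefine} by construction, so the paths accumulated in $S_T$ and returned on Line~\ref{alg:postpro} are conflict-free.

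I expect the main obstacle to be the round-robin argument of the second paragraph: making rigorous that a finished (reset) agent cannot overtake an unfinished agent, so that exactly $N$ turns suffice. The per-turn bound and the conflict-freeness follow fairly directly from Theorem~\ref{lsrp:thm1} and the \textsc{Occupied} check, but the turn-counting requires carefully tracking accumulated priorities across resets and noting that an agent is planned only in the iterations where its arrival time equals $t_{min}$, i.e., where it belongs to $I_{curr}$.
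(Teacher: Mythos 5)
Your proof is correct and takes essentially the same route as the paper's own sketch: apply Theorem~\ref{lsrp:thm1} to bound each progress step of the highest-priority agent by $N \cdot D_{max}$, multiply by $\text{diam}(G)$ to bound the time for that agent to reach its goal, and then use the priority reset (Lines~\ref{alg:resepri}--\ref{alg:resepriend}) to rotate the top priority through all $N$ agents, giving the $N^2$ factor. Your round-robin priority accounting and the explicit conflict-freeness argument merely flesh out details that the paper's sketch leaves implicit, so there is no substantive difference in approach.
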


\begin{proof}
From Theorem \ref{lsrp:thm1}, the agent with the highest priority arrives $v^*$ within $N \cdot D_{max}$.
So agent $i$ arrives $v_{g}^i$ within $\text{diam}(G) \cdot N \cdot D_{max}$.
Once agent $i$ reaches $v_{g}^i$, its priority is reset, which must be smaller than the priority of any other agents, and another agent $j$ gains the highest priority and is able to reach its goal.
This process continues until all agents in $I$ have gained the highest priority at least once, and reached their goals.
Thus, the total time for all agents to achieve their goals is within $\text{diam}(G) \cdot N^2 \cdot D_{max}$.
\end{proof}

\section{Extension with Swap Operation}\label{lsrp:sec:swap}

Since LSRP only guarantees weak termination under conditions as aforementioned, for MAPF-AA in general, there are instances where LSRP never terminates although the instance is solvable.
Fig.~\ref{fig:failure} shows an example. 
Assume the yellow agent has higher initial priority than the blue agent. 
The yellow agent first pushes the blue agent until reaches goal $D$. 
Then, the blue agent has higher priority and pushes the yellow agent until reaches goal $E$. 
As a result, these two agents iteratively push each other and can never successfully swap their locations.

Similar issues also appear in PIBT, which is addressed by an additional swap operation~\cite{okumura2022priority}.
Inspired by this, we develop \asypibtswap(Alg.~\ref{lsrp:alg:asy_pibt}).
\asypibtswap takes the same input as \asypibt, and invokes a procedure \textsc{Swap-Possible-Required} to check 
if there exists an agent $j$ that needs to swap location with agent $i$.
If such a $j$ exists, \asypibtswap sorts the successor vertices in $C$ based on their distance to agent $j$'s goal from the furthest to the nearest (Line~\ref{alg2:creverse}, Alg.~\ref{lsrp:alg:lsrp}), and plans the actions of $i$ and $j$ differently:
if agent $i$ can move to the furthest vertex in $C$, and agent $j$ is not planned, then $j$ moves to $i$'s current vertex after $i$ vacates it (Line \ref{alg2:kexswap}-\ref{alg2:kexswapend},\ref{alg2:knexswap}-\ref{alg2:knexswapend} in Alg.~\ref{lsrp:alg:asy_pibt}).
We refer to this variant of LSRP with swap as LSRP-SWAP.
Since LSRP-SWAP alters the order of vertices in $C$, the analysis in Theorem~\ref{lsrp:thm1} based on the nearest vertex $v^*$ of an agent $i_*\in I_{curr}$ cannot be applied. 
Theorem~\ref{lsrp:thm1} and~\ref{lsrp:thm2} thus may not hold for LSRP-SWAP.

\textsc{Swap-Required-Possible} is similar to the concept of the swap operation in PIBT~\cite{okumura2023lacam2}.
The main difference is that LSRP-SWAP must consider the duration conflict between agents rather than the vertex or edge conflict in PIBT.
Appendix provides more detail of LSRP-SWAP.

\begin{figure}[tb]
    \centering
    \includegraphics[width=\linewidth]{./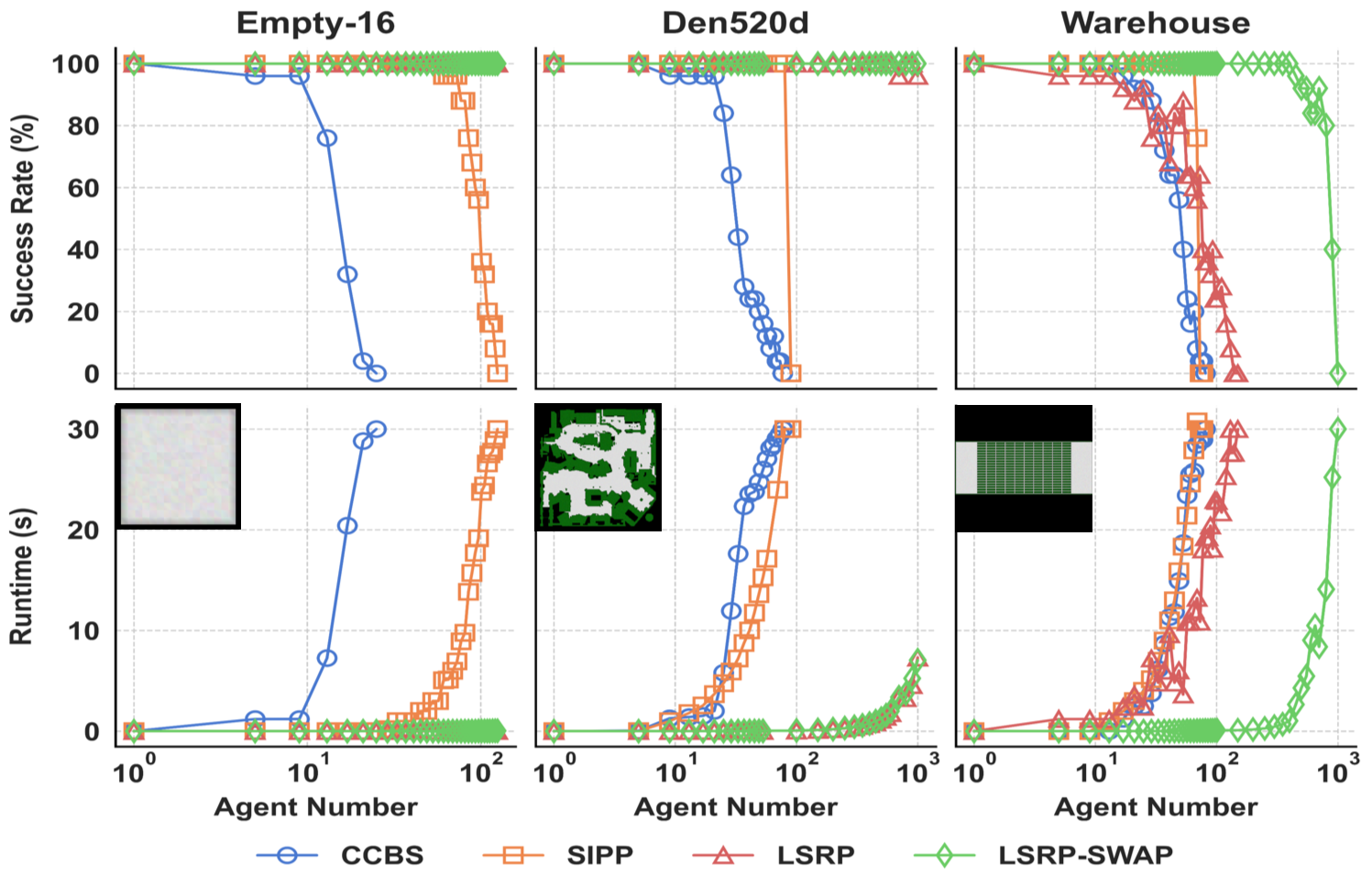}
    \caption{Success rate and runtime results}
    \label{fig:exp-scalability}
\end{figure}

\begin{figure}[tb]
    \centering
    \includegraphics[width=\linewidth]{./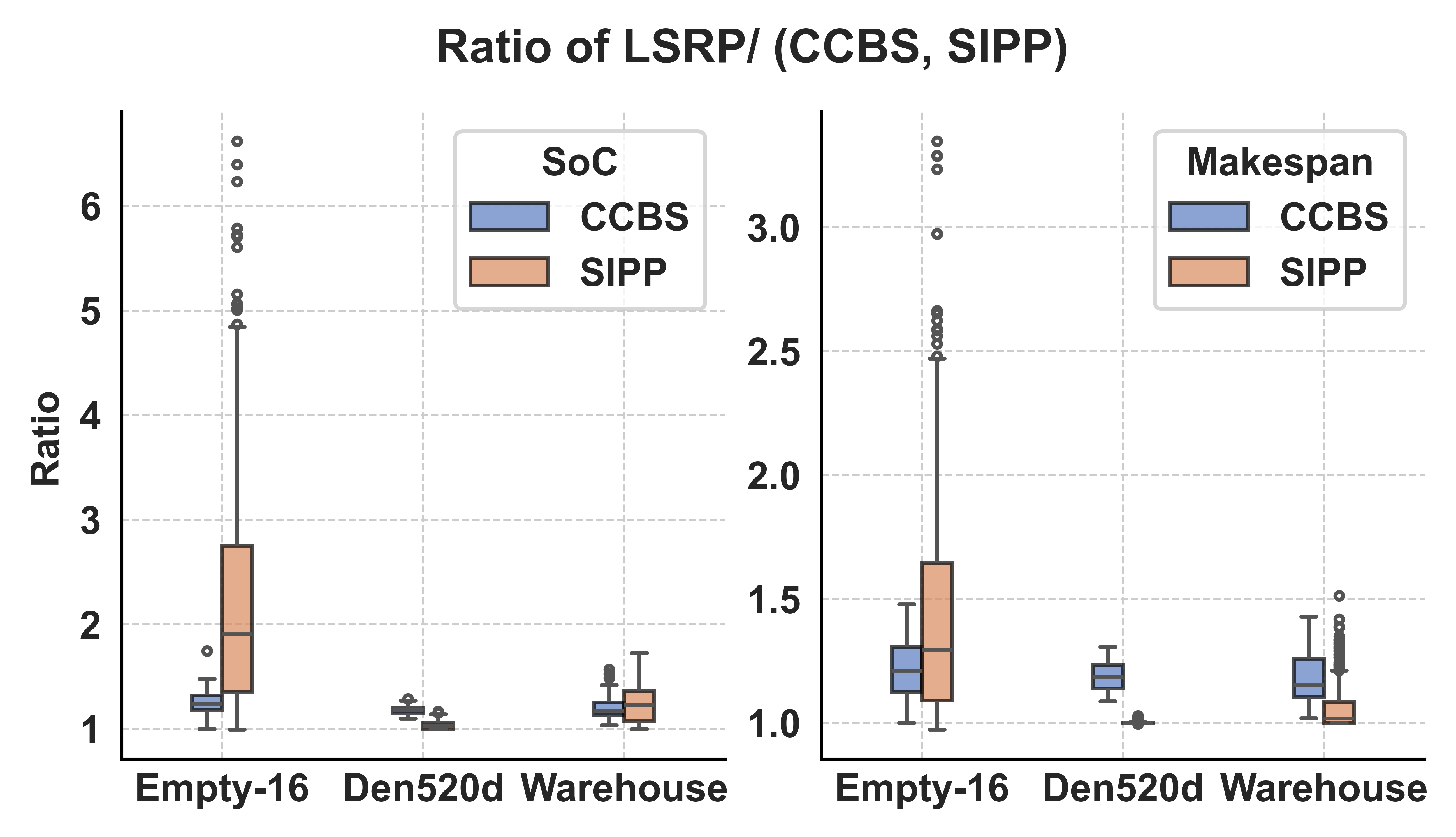}
    \caption{SoC and makespan ratios of LSRP}
    \label{fig:exp-quality_lsrp}
\end{figure}

\begin{figure}[tb]
    \centering
    \includegraphics[width=\linewidth]{./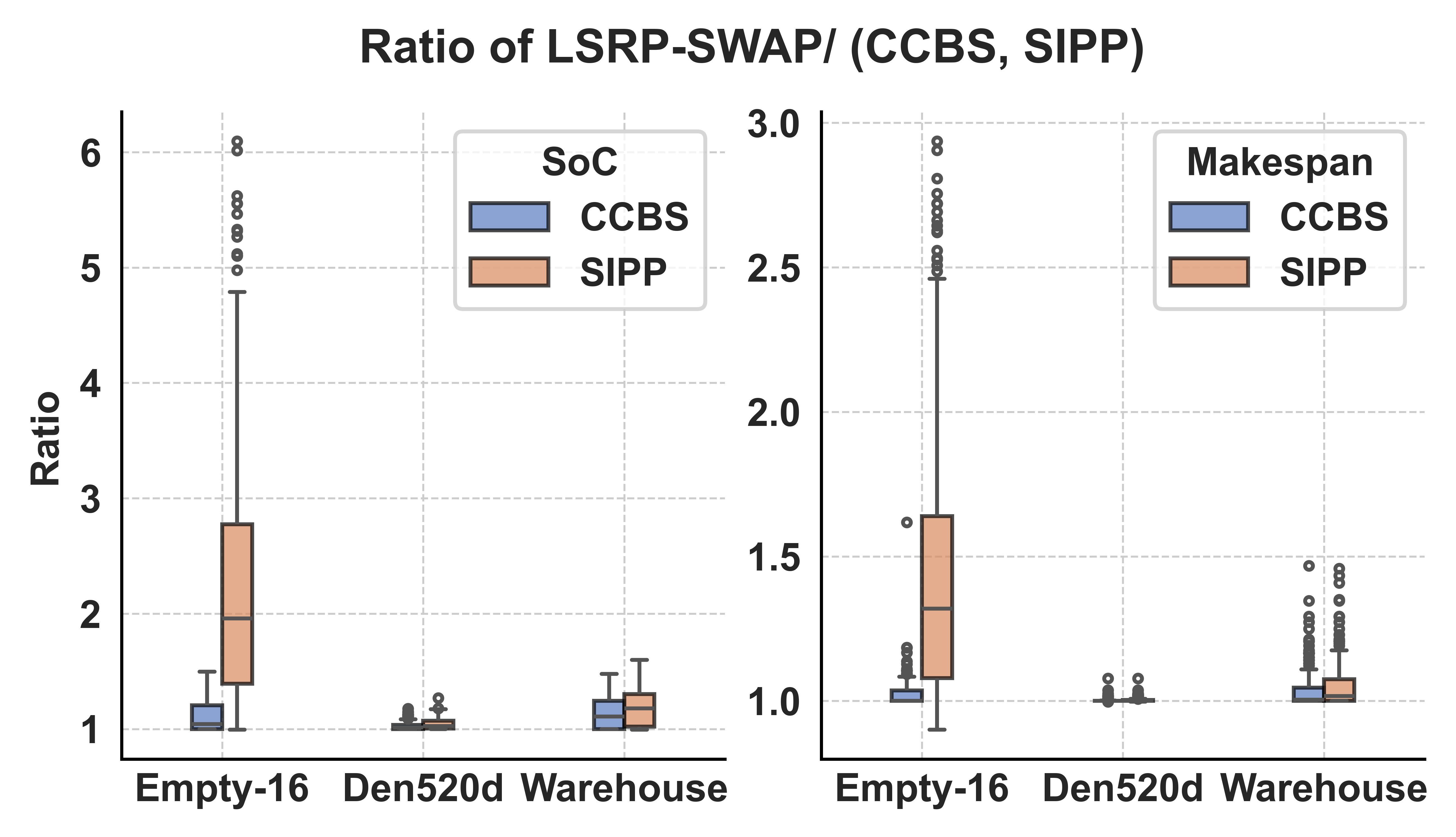}
    \caption{SoC and makespan ratios of LSRP-SWAP}
    \label{fig:exp-quality}
\end{figure}

\begin{figure}[tb]
    \centering
    \includegraphics[width=\linewidth]{./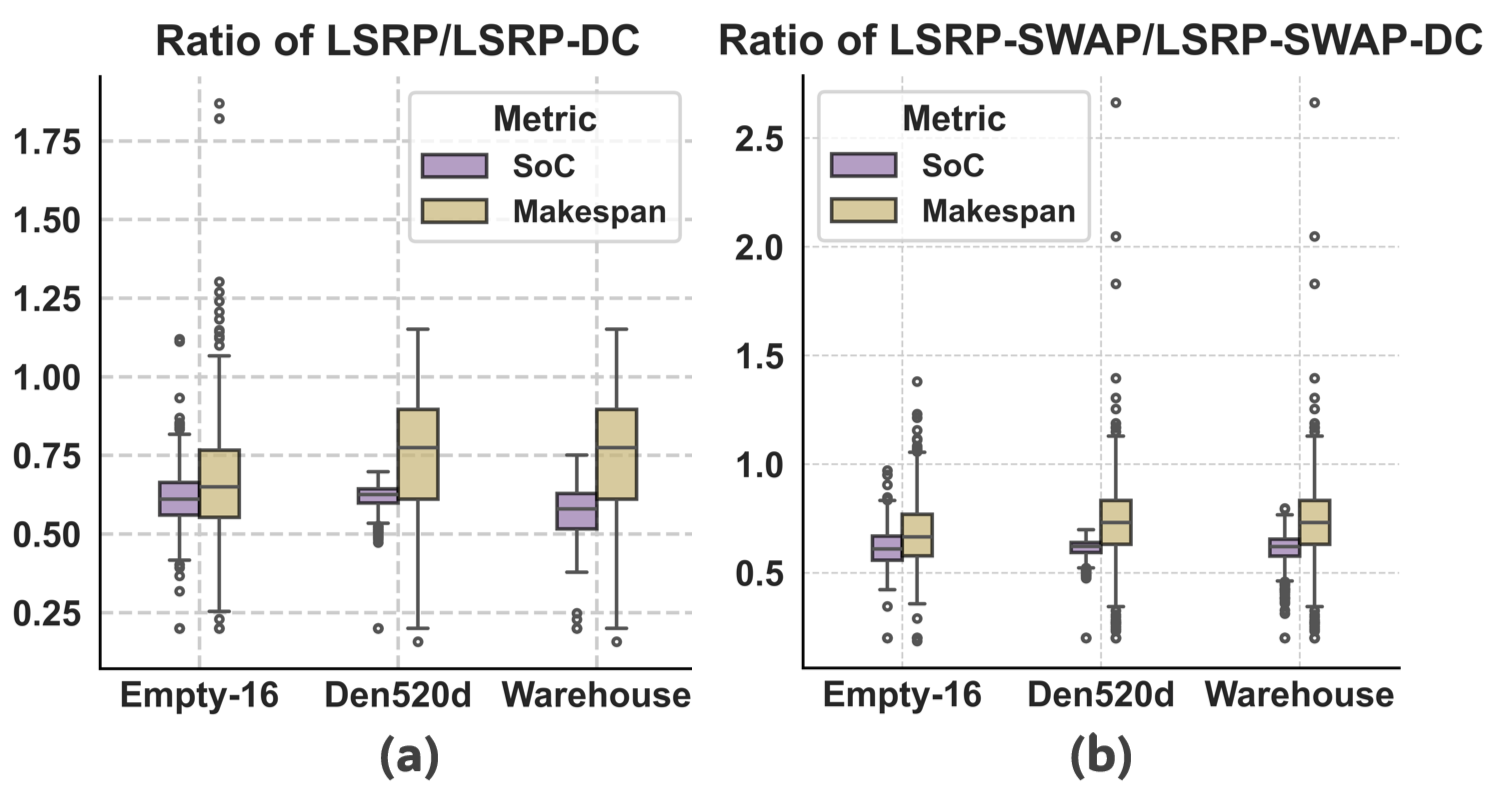}
    \caption{Solution costs comparison with and without considering asynchronous actions.}
    \label{fig:IAA}
\end{figure}

\section{Experimental Results} \label{lsrp:result}

Our experiments use three maps and the corresponding instances (starts and goals) from a MAPF benchmark~\cite{stern2019multi}.
For each map, we run 25 instances with varying number of agents $N$, and we set a 30-seconds runtime limit for each instance.
We make the grid maps four-connected, and each agent has a constant duration when going through any edge in the grid.
This duration constant of agents vary from 1.0 to 5.0.
We implement our LSRP and LSRP-SWAP in C++,
and compare against two baselines.
The first baseline is a modified CCBS~\cite{ANDREYCHUK2022103662}.
The original CCBS implementation considers the shape of the agents and does not allow different agents to have different durations when going through the same edge.
We modified this public implementation by using the duration conflict and allow different agents to have different duration. Note that the constraints remain ``sound''~\cite{ANDREYCHUK2022103662}, and the solution obtained is optimal to MAPF-AA.
The second baseline \masipp (see Sec.~\ref{lsrp:other-related}) adopts prioritized planning using SIPP~\cite{phillips2011sipp} as the single-agent planner.
It uses the same initial priority as LSRP and LSRP-SWAP.
All tests use Intel i5 CPU with 16GB RAM.

\subsection{Success Rates and Runtime}\label{sec:sucrate}

Fig.~\ref{fig:exp-scalability} shows the success rates and runtime of the algorithms.
Overall, it is obvious that our LSRP and LSRP-SWAP can often handle more than an order of magnitude number of agents than the baselines within the time limit with much smaller runtime.
In particular, LSRP-SWAP runs fastest and handles up to 1000 agents in our tests.
In sparse maps (\texttt{Empty-16},\texttt{Den520d}), LSRP and LSRP-SWAP both scale well with respect to $N$, and the reason is that these graphs have many cycles that make the push operation highly efficient when resolving conflicts between the agents.
In a cluttered environment (\texttt{Warehouse}), LSRP has similar performance to \masipp while LSRP-SWAP outperforms both LSRP and \masipp, which shows the advantage of the swap operation in comparison with LSRP without swap. 

\subsection{Solution Quality}
Now we examine the solution quality of proposed methods.
We measure the solution quality using both sum of costs (SoC) and makespan.
We compare the solution quality (SoC and makespan) of LSRP and LSRP-SWAP with baselines (i.e., CCBS and \masipp) respectively. 
The ratios $A/B$ used for comparison are computed based on the instances from the experiment in Sec.~\ref{sec:sucrate} that were successfully solved by both planners A and B, where A is LSRP or LSRP-SWAP, and B is CCBS or \masipp.
The higher the ratio, the better the solution quality of the baseline $B$.

As shown in Fig.~\ref{fig:exp-quality_lsrp} and \ref{fig:exp-quality}, the median ratios is about 4x in SoC, and 1.25x in makespan, which means LSRP and LSRP-SWAP find more expensive solutions than the baselines.
It indicates LSRP and LSRP-SWAP achieve high scalability at the cost of solution quality, 
while the baselines usually find high quality solution with limited scalability.
Note that \masipp solves more instances than CCBS, and the ratios for CCBS and \masipp are thus calculated based on different sets of instances.
So \masipp sometimes has a higher ratio than the optimal planner CCBS on the map \texttt{Empty-16}.

\subsection{Impact of Asynchronous Actions}

This compares the solution costs of LSRP and LSRP-SWAP when asynchronous actions are considered versus when they are not.
We use the same instances as in Sec.~\ref{sec:sucrate}.
When ignoring the asynchronous actions, the duration constant of all agents are set to 5.0, and the resulting MAPF-AA problem becomes MAPF-DC, where all agents have common planning timestamps and can be planned in a step-by-step manner.
Note that even for MAPF-DC, due to the duration conflict, PIBT cannot be directly applied as discussed in Sec.~\ref{lsrp:sec:retopibt}.
As shown in Fig.~\ref{fig:IAA}, by considering the asynchronous actions, the obtained solutions are usually 30\% and sometimes up to 75\% cheaper than the solutions that ignore the asynchronous actions.
This result verifies the importance of considering asynchronous actions during planning, especially when agents have very different duration.

\section{Conclusion and Future Work} \label{lsrp:conclusion}
This paper develops rule-based planners for MAPF-AA by leveraging both PIBT for MAPF and LSS to handle asynchronous actions.
The experimental results verify their ability to achieve high scalability for up to a thousand agents in various maps, at the cost of solution quality.
Future work includes developing an anytime planner that can further improve solution quality within the runtime limit for MAPF-AA. Specifically, LSRP can be potentially extended to an anytime version that is similar to LaCAM over PIBT~\cite{okumura2023lacam2}. This extension would allow LSRP to iteratively optimize solution quality within the runtime limit.

\section*{Acknowledgements}
This work was supported in part by the Natural Science Foundation of China under Grant 62403313.

\bibliography{aaai25}
\section*{Appendix}
This section provides the details of LSRP-SWAP, which involves first detecting whether two agents need to swap their locations, and second planning the actions of the agents to achieve the swap.
LSRP-SWAP does not consider the duration conflict between the agents in the detection process.
When planning the actions of the agents to achieve the swap, the duration conflict is considered on Lines 17 and 25 in Alg. \ref{lsrp:alg:asy_pibt} to let the agents wait and then move when needed.

We introduce the following concepts, and some of them have been mentioned earlier in the main paper.

\noindent\textbf{SWAP}:
\swapop is introduced in~\cite{okumura2023lacam2}.
It is a process where two agents exchange their current vertices in a 
conflict-free manner through a sequence of actions.

\noindent\textbf{PUSH}:
When $i$ applies a \pushop operation on $j$,
$j$ moves to another adjacent vertex that is not currently occupied by $i$, 
then $i$ moves to $j$'s current vertex.

\noindent\textbf{PULL}:
This is the reverse operation of \pushop.
When $i$ applies a \pullop operation on $j$, $j$ moves to $i$'s current vertex after $i$ moves to another vertex.

\noindent\textbf{OCCUPANT} is a procedure that is used in Alg.~\ref{lsrp:alg:swap-required-possible} Line \ref{algI:joccupy}.
It takes $(v, I_{curr}, s_{prev}, s_{next})$ as the input, and seeks to find an agent $j\in I_{curr}$ that currently occupies $v$ and has not yet been planned, according to $s_{prev}$ and $s_{next}$. If no such an agent $j$ exists, \noindent\textbf{OCCUPANT} returns an empty set.

\section*{$\textsc{Swap-Required-Possible}$} 

The procedure $\textsc{Swap-Required-Possible}$ is called in Alg.~\ref{lsrp:alg:asy_pibt}.
Recall that $v_g^i$ denotes the goal vertex of agent $i \in I$, and $C$ denotes a set of vertices that agent $i$ can reach from its current vertex, sorted according to the distance from $v_g^i$ from the nearest to the furthest (Alg.~\ref{lsrp:alg:asy_pibt}).

Alg.~\ref{lsrp:alg:swap-required-possible} takes as input an agent $i \in I$ and a vertex $v$ in $C$ that is the closest to $v_g^i$.
Here, $s_{prev},s_{next},I_{curr}$ are ``global'' variables that are created on in Alg.~\ref{lsrp:alg:lsrp} at the beginning of each iteration of LSRP-SWAP.
When $v = v^i$ (Line \ref{algI:atgoal}), it guarantees that $v = v^i = v^i_g$, 
which means agent $i$ reaches the goal and $i$ does not need to swap vertices with others to get closer to $v^i_g$, 
and Alg.~\ref{lsrp:alg:swap-required-possible} ends.
Otherwise ($v\neq v^i$), it invokes \textsc{Occupant} to find an agent $j$ to swap with.

If \textsc{Occupant} finds an agent $j$ ($j \neq \emptyset$), it indicates that, agent $i$ can swap with agent $j$ to get closer to $v^i_g$.
Then, it invokes \textsc{Swap-Check} to check if it is enough to eventually swap agents $i$ and $j$ by just using a sequence of pull operations of agent $j$ to agent $i$ (Line \ref{algI:jswaprequired}).
If \textsc{Swap-Check} returns true, it means that $i,j$ cannot be swapped by just using a sequence of pull operations.
If \textsc{Swap-Check} returns false, it means that $i,j$ can be swapped by just using a sequence of pull operations, or agent $i,j$ do not need to be swapped at all.
On Line \ref{algI:jswaprequired}, when \textsc{Swap-Check} returns true, another \textsc{Swap-Check} is applied (Line \ref{algI:jswappossible}) to check if agent $i,j$ can be swapped by letting agent $i$ pull $j$. 
After \textsc{Swap-Check} at line \ref{algI:jswappossible} returns false, it means that using pull operations can swap the agents.
Agent $j$ is thus returned and the procedure ends (Line \ref{algI:jreturn}). 

\begin{algorithm}[tb]
\small
	\caption{\textsc{Swap-Required-Possible}}\label{lsrp:alg:swap}
	\begin{algorithmic}[1]
\Require $i,v$ 
\Notation $i,j,k \in I$
\State $v^i \leftarrow s_{prev}^i.v$
\State \textbf{if} $v = v^i {\rm \textbf{ then return }\emptyset}$ \label{algI:atgoal}
\State $j \leftarrow \textsc{Occupant} (v,I_{curr},s_{prev},s_{next})$ \label{algI:joccupy}
\If{$j \neq \emptyset$} 
\State $v^j \leftarrow s_{prev}^j.v$
\If{$\textsc{Swap-Check}(v^j,v^i)$} \label{algI:jswaprequired}
\If{$!\textsc{Swap-Check}(v^i,v^j)$} \label{algI:jswappossible}
\State \Return $j$ \label{algI:jreturn}
\EndIf
\EndIf
\EndIf
\For{$u \in \text{Neigh}(v^i)$} \label{algI:u}
\State $k \leftarrow \textsc{Occupant}(u,I_{curr},s_{prev},s_{next})$ \label{algI:koccupy}
\State \textbf{if} $k = \emptyset \vee s_{prev}^k.v = v \textbf{ then continue}$ \label{algI:atv}
\If{$\textsc{Swap-Check}(v^i,v)$} \label{algI:kswaprequired}
\If{$!\textsc{Swap-Check}(v,v^i)$} \label{algI:kswappossible}
\State \Return $k$ \label{algI:kreturn}
\EndIf
\EndIf
\EndFor
\State \Return $\emptyset$ \label{algI:end}
	\end{algorithmic}
\label{lsrp:alg:swap-required-possible}
\end{algorithm}

\begin{algorithm}[tb]
\caption{\textsc{Swap-Check}}
\begin{algorithmic}[1]
\Require $v^i,v^j$ 
\Notation $i,j \in I$
\State $v^{pl} \leftarrow v^{i};v^{pd}\leftarrow v^{j}$
\While{$v^{pl} \neq v^{j}$} 
\State $n \leftarrow (\text{Neigh}(v^{pl})).size()$ \label{algI:neighboursize}
\For{$u \in \text{Neigh}(v^{pl})$}
\If{$u = v^{pd}$}
\State $n-1$  \label{alg5:n-1}
\State ${\rm \textbf{continue}}$
\EndIf
\State $v^{pl} \leftarrow u$\label{algI:assignVertex}
\EndFor
\State \textbf{if} $n\, >= \,2 {\rm \textbf{ then return false }}$ \label{algIII:n2}
\State \textbf{if} $n\, <= \,0 {\rm \textbf{ then return true }}$ \label{algIII:n0}
\If{$v^{pd} = v^j_g$} \label{algIII:iatgoal}
\If{$argmin(h(i,u))_{u\in Neigh(v^{pl})} = v^{pd}$} \label{algIII:closest}
\State \Return ${\rm \textbf{true }}$ \label{algIII:failtoswap}
\EndIf
\EndIf
\State $v^{pd} \leftarrow v^{pl}$ \label{algI:whileend}
\EndWhile
\State \Return ${\rm \textbf{true } }$ \label{algIII:end}
\end{algorithmic}\label{lsrp:alg:possible-required}
\end{algorithm}

Subsequently, from Line \ref{algI:u} in Alg.~\ref{lsrp:alg:swap}, all adjacent vertices of $v^i$ are iterated in arbitrary order.
In each iteration, it tries to find an agent $k\in I_{curr}$ that currently occupies a neighbor vertex $u$ of $v^i$ and is not planned yet (Line \ref{algI:koccupy}).
If $u$ is unoccupied (i.e., $k = \emptyset$) or $u = v$ (i.e., $s^k_{prev}.v = v$), then these cases are already handled by Lines \ref{algI:joccupy}-\ref{algI:jreturn}, and the current iteration of the for loop ends (Line \ref{algI:atv}).
If $k$ is found, on Lines~\ref{algI:kswaprequired}-\ref{algI:kswappossible} in Alg.~\ref{lsrp:alg:swap}, \textsc{Swap-Required-Possible} first places agents $k$ and $i$ at vertices $v^i$ and $v$ respectively, and then invokes \textsc{Swap-Check} twice (Line \ref{algI:kswaprequired}-\ref{algI:kswappossible}) to check whether agent $k$ needs to swap with agent $i$.
The first call of \textsc{Swap-Check} checks whether agents $k$ and $i$ can swap their locations by repeatedly letting agent $k$ pulls agent $i$ (Line \ref{algI:kswaprequired}). If the \textsc{Swap-Check} on Line \ref{algI:kswaprequired} returns true, which indicates that swap operation cannot be done through agent $k$'s pulling. Then, the second \textsc{Swap-Check} is applied to check if these two agents can be swapped by letting agent $i$ pulls agent $k$ (Line \ref{algI:kswappossible}). When the second \textsc{Swap-Check} returns false, it indicates that the pull operations can swap agents $k$ and $i$. Thus agent $k$ is returned, indicating that agent $i$ needs to swap with agent $k$, and the procedure ends (Line \ref{algI:kreturn}). 

If no such an agent $j$ or $k$ exists, it indicates that either there is no need for agent $i$ to swap with another agent, or there is no way for agent $i$ to swap with another agent. Thus no agent is returned and Alg.~\ref{lsrp:alg:swap} ends (Line \ref{algI:end}).

\section*{\textsc{Swap-Check}}

\textsc{Swap-Check} is a procedure to predict if a sequence of pull operations can swap agent $i$ and $j$. 
In \textsc{Swap-Check}, let $v^{pl} \gets v^i$, where ``pl'' stands for pull, indicating that the agent attempts to pull another agent, and let $v^{pd} \gets v^j$, where ``pd'' stands for pulled, indicating that the agent is pulled by another agent. 
To do this, we continuously let $i$ pull $j$ (Line~\ref{algI:neighboursize}-\ref{algI:whileend}).
Specifically, in each while iteration, all neighbor vertices $u$ of the current vertex $v^{pl}$ of the pulling agent $i$ are considered, and $u$ is skipped if $u$ is same as the vertex $v^{pd}$ of the agent $j$ that is pulled.
If $u$ is different from $v^{pd}$, then $u$ is assigned to be the next vertex that the pulling agent $i$ will move to (Line~\ref{algI:assignVertex}).
Then, the pulled agent moves to the current vertex of the pulling agent (Line~\ref{algI:whileend}).
This while loop ends in four cases. 
\begin{itemize}
    \item 
(1) Agent $i$'s current vertex has at least 2 neighbor vertices, and $i$ can move to a vertex different from agent $j$'s current vertex. In this case, swap is possible through pull operations from agent $i$ to agent $j$, and no other operations are required. \textsc{Swap-Check} thus returns false (Line \ref{algIII:n2}). 
    \item 
(2) Agent $i$ has no neighbor vertices besides agent $j$'s current vertex. In this case, solely using push cannot achieve swap, and pull operation is thus required. 
\textsc{Swap-Check} thus returns true (Line \ref{algIII:n0}). 
    \item 
(3) Let $h(i,u)$ denote the shortest path distance from vertex $u$ to agent $i$'s goal $v^i_g$.
When the pulled agent $j$ is at its goal vertex $v_g^i$ (Line \ref{algIII:iatgoal}), if for the pulling agent $i$, the nearest vertex to its goal $v^i_g$ among the neighbor vertices Neigh($v^{pl}$) of $i$'s current vertex $v^{pl}$ is $v^{pd}$ (Line \ref{algIII:closest}), \textsc{Swap-Check} returns true, and swap operation is required.
    \item 
(4) When $v^{pl} = v^j$, it indicates that agent $i$ has pulled agent $j$ through a cycle (as defined in Sec. 5).
It means that the two agents have not swapped their vertices. Therefore, additional pull operation is required, and true is thus returned.
\end{itemize}

\begin{figure}[tb]
    \centering
     \includegraphics[width=0.8\linewidth]{./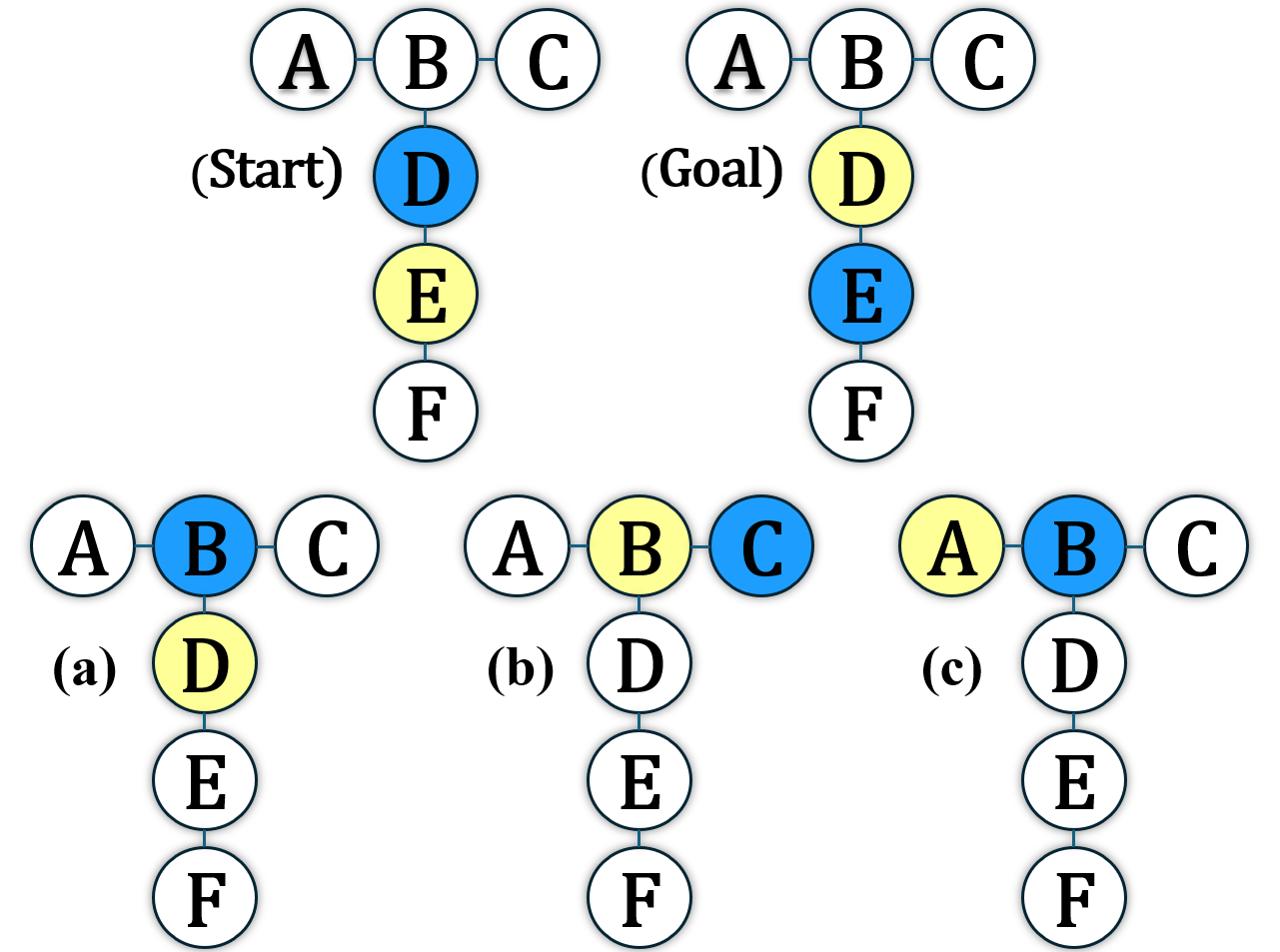}
    \caption{
    The two agents with color blue and yellow start like (Start), going through a series of collision-free operations, and finally reach (Goal). 
    The only feasible plan is to swap the location of two agents via operations (a), (b) and (c), where both agents need to move away from their goal locations.
    \asypibt cannot find a feasible solution in this case as it always lead to one of the agent moves towards to its goal location.}
    \label{figI:fail}
\end{figure}
\section*{Toy Example}
This section presents an example illustrating how LSRP-SWAP solves the instance in Fig.~\ref{figI:fail} that requires swapping the location of two agents in a tree-like graph.
Let $b$ denote the blue agent and $y$ denote the yellow agent. Initially, agent $y$ is assigned with the highest priority, while agent $b$ is assigned with the lowest priority.

\subsubsection{LSRP-SWAP for (a)}
As shown in Fig.~\ref{figI:fail}, starting from the initial state (Start), after a few push operations, LSRP-SWAP reaches (a).
In (a), agent $y$ is at its goal vertex, so its priority is reset, while agent $b$'s priority increases by 1. 
Now agent $b$ has the highest priority. LSRP-SWAP invokes \asypibtswap to plan for agent $b$ and $y$ in decreasing order of priority, and agent $b$ is planned at first. 

In the \asypibtswap procedure of $b$, it invokes \textsc{Swap-Required-Possible} with input $i = b, v = D$, since vertex D is the closest to $b$'s goal vertex (Line \ref{alg2:swap}, Alg. \ref{lsrp:alg:asy_pibt}). 
The procedure finds that agent $y$ occupies D through \textsc{Occupant} (Line \ref{algI:joccupy}, Alg. \ref{lsrp:alg:swap-required-possible}) and invokes \textsc{Swap-Check} (Line \ref{algI:jswaprequired}, Alg. \ref{lsrp:alg:swap-required-possible}).
Here, \textsc{Swap-Check} predicts whether the two agents can swap their locations by iteratively moving $b$ to $y$'s current vertex and moving $y$ to another vertex.
The input to this \textsc{Swap-Check} is that $v^i=B$ and $v^j=D$, meaning agent $y$ is at vertex $D$ and agent $b$ is at vertex $B$.
After a few iterations, $y$ arrives at vertex F and $b$ arrives at vertex E. $y$ finds no vertex to move to other than vertex E that is occupied by agent $b$ (Line \ref{algIII:n0}, Alg.~\ref{lsrp:alg:possible-required}).
Therefore, additional operation is required, and \textsc{Swap-Check} returns true.

Then, \textsc{Swap-Required-Possible} invokes another \textsc{Swap-Check} (Line \ref{algI:jswappossible}, Alg.\ref{lsrp:alg:swap-required-possible}) to predict whether two agents can swap their locations by iteratively moving $y$ to $b$'s current vertex and moving $b$ to another vertex.
The input to this \textsc{Swap-Check} is that agent $v^i=D$ and $v^j=B$, meaning agent $y$ is at vertex $D$ and agent $b$ is at vertex $B$.
\textsc{Swap-Check} finds that agent $b$'s current vertex B has two occupied vertices (A and C) that are different from $y$'s current vertex D, and agent $b$ can move to either of these two occupied vertices.
Swapping the location of both agents is thus possible since $B$ has two unoccupied neighbor vertices, and \textsc{Swap-Check} thus returns true (Line \ref{algIII:n2}, Alg. \ref{lsrp:alg:possible-required}). 

After \textsc{Swap-Check} returns true, \textsc{Swap-Required-Possible} ends and returns agent $y$ (Line~\ref{algI:jreturn}, Alg. \ref{lsrp:alg:swap-required-possible}), indicating that agent $y$ is able to swap vertices with agent $b$.
After \textsc{Swap-Required-Possible} returns agent $y$, the \asypibtswap procedure (Line~\ref{alg2:creverse}, Alg.~\ref{lsrp:alg:asy_pibt}) reverses the order of vertices in $C$ to start the swapping process, and $b$ moves to $C[0]$, which is the vertex C in this toy example.
Then, agent $b$ pulls $y$ to $b$'s current vertex (Line \ref{alg2:knexswap}-\ref{alg2:knexswapend}, Alg.~\ref{lsrp:alg:asy_pibt}) as shown in (b). 
So far, agents $b$ and $y$ are planned, and the LSRP-SWAP procedure for (a) in Fig.~\ref{figI:fail} ends.

\subsubsection{LSRP-SWAP for (b)}
In the next iteration of LSRP-SWAP, when planning for (b) in Fig.~\ref{figI:fail}, agent $b,y$ are both in $I_{curr}$ and are planned in decreasing order of priority. 
The procedure invokes \asypibtswap to plan for agent $b$ at first. 
Here, \asypibtswap seeks to move agent $b$ to vertex B and pushes agent $y$ away, thus a recursive call to \asypibtswap of agent $y$ is conducted. 

In this recursive call of \asypibtswap on agent $y$, 
it invokes the \textsc{Swap-Required-Possible} procedure with input $i = y, v = D$, since D is the closest vertex to agent $y$'s goal.
No agent occupies D, so the procedure arrives at Line \ref{algI:u} of Alg.~\ref{lsrp:alg:swap-required-possible}.
\textsc{Swap-Required-Possible} (Line \ref{algI:u} in Alg. \ref{lsrp:alg:swap-required-possible}) iterates all neighbor vertices of $y$'s current vertex B (in arbitrary order) .
For example, it checks vertex D and A and finds they are unoccupied, which means the \textsc{OCCUPANT} procedure (Line \ref{algI:atv} in Alg. \ref{lsrp:alg:swap-required-possible}) returns an empty set for vertices D and A.
Then, when iterating vertex $C$, \textsc{OCCUPANT} finds that agent $b$ occupies vertex C.
\textsc{Swap-Required-Possible} then predicts the case that agent $b$ moves to vertex B after agent $y$ moves to vertex D by invoking \textsc{Swap-Check} with input $v^i = B, v^j = D$ (Line \ref{algI:kswaprequired}, Alg.~\ref{lsrp:alg:swap-required-possible}).
In \textsc{Swap-Check}, after a few while-loop iterations, agent $y$ reaches vertex F and has no neighbor vertex to move to, swap is required and true is thus returned (Line \ref{algIII:n0}, Alg.~\ref{lsrp:alg:possible-required}). Then, \textsc{Swap-Possible} is invoked with input $v^i = D, v^j = B$. 
In this \textsc{Swap-Possible}, it finds that when agent $b$ is at vertex B, $b$ has 2 vertices to move to, so a swap is possible , true is thus returned (Line \ref{algIII:n2}, Alg.~\ref{lsrp:alg:possible-required}).
After \textsc{Swap-Possible}, agent $b$ is returned as the return argument $k$, the agent to swap vertices with agent $y$, and \textsc{Swap-Required-Possible} ends (Line \ref{algI:kreturn}, Alg.~\ref{lsrp:alg:swap-required-possible}).
In $y$'s \asypibtswap, its $C$ is reversed, and C[0] = vertex A.  
A is identified first; \asypibtswap finds that A is unoccupied, thus moving $y$ to A. 
Agent $b$ moves to B after $y$ arrives at A, as shown in (c).
Agents $b$ and $y$ are planned, and the LSRP-SWAP procedure for (b) ends.

\subsubsection{LSRP-SWAP for (c)}
Starting from (c), after a few while-loop iterations of LSRP-SWAP, where push operation is applied in each iteration, agents $b$ and $y$ eventually reach their goal vertices D and E, respectively. 
LSRP-SWAP successfully swaps the locations of the two agents.
\end{document}